\documentclass[conference]{IEEEtran}
\IEEEoverridecommandlockouts

\usepackage{cite}
\usepackage{amsmath,amssymb,amsfonts}
\usepackage{algorithmic}
\usepackage{graphicx}
\usepackage{textcomp}
\usepackage{xcolor}
\usepackage{amsthm}
\def\BibTeX{{\rm B\kern-.05em{\sc i\kern-.025em b}\kern-.08em
    T\kern-.1667em\lower.7ex\hbox{E}\kern-.125emX}}

\newtheoremstyle{mythmstyle} 
    {3pt} 
    {3pt} 
    {\itshape} 
    {} 
    {\bfseries}
    {.} 
    {0.5em} 
    {\textbf{#1~#2}}
\theoremstyle{mythmstyle}
\newtheorem{theorem}{Theorem}
\newtheorem{lemma}{Lemma}
\newtheorem{definition}{Definition}
\newtheorem{assumption}{Assumption}
\newtheorem{remark}{Remark}

\begin{document}

\title{An early termination strategy for the distributed biased min-consensus protocol under disturbances

\thanks{Zicheng Huang and Wangzhi Zhou contributed equally to this work.
This work was supported by
the National Science and Technology Major Project of China under Grant
No. 2022ZD0120003, the National Natural Science Foundation of China
under Grant No. 62303112,
and the Natural Science Foundation of Jiangsu Province under Grant No.
BK20230826. (Corresponding author: Yuanqiu Mo.)}
}

\author{
\IEEEauthorblockN{Zicheng Huang}
\IEEEauthorblockA{
\textit{School of Mathematics} \\
\textit{Southeast University} \\
Nanjing, China \\
213233132@seu.edu.cn}
\and

\IEEEauthorblockN{Wangzhi Zhou}
\IEEEauthorblockA{
\textit{School of Mathematics} \\
\textit{Southeast University} \\
Nanjing, China \\
213230425@seu.edu.cn}
\and

\IEEEauthorblockN{Yuanqiu Mo}
\IEEEauthorblockA{\textit{School of Mathematics} \\
\textit{Southeast University} \\
Nanjing, China\\
yuanqiumo@seu.edu.cn}
}

\maketitle

\begin{abstract}
The distributed biased min-consensus (DBMC) protocol is an iterative scheme that solves the shortest path problem asymptotically, requiring only local information exchange between neighboring nodes. By appropriately designing the gain function, prior work~\cite{article3} proposed a DBMC-based system that ensures convergence within a pre-specified time interval. However, this guarantee assumes the absence of disturbances. In this paper, we study the DBMC-based system under disturbances affecting the edge weights. We first establish rigorous error bounds on the resulting state estimates. Building on this analysis, we then propose a practical early termination strategy to prevent potential singularities, specifically, unbounded gain, that may arise in the presence of disturbances, while still ensuring that the shortest paths are correctly identified. Simulations are performed to validate and illustrate the theoretical results.
\end{abstract}

\begin{IEEEkeywords}
Shortest path problem,  Distributed biased min-consensus, Pre-specified finite time control
\end{IEEEkeywords}

\section{Introduction}
The shortest path problem seeks to determine the path with the smallest total weight between two nodes in a graph~\cite{bellman1958}. While classical algorithms such as Dijkstra’s algorithm~\cite{Dijkstra}, Bellman-Ford algorithm~\cite{bellman1958} and A* algorithm~\cite{b7} solve this problem effectively in centralized settings, their reliance on global information limits scalability and resilience in large or failure-prone networks.  
As a distributed alternative, the distributed biased min-consensus (DBMC) algorithm proposed in~\cite{article2} offers a more flexible and scalable solution. By relying solely on local communication, DBMC allows each node to gradually identify its shortest path without requiring global topology knowledge or carefully chosen initial values. This decentralized nature makes the method particularly well-suited for applications in dynamic, large-scale, or resource-constrained environments.

Given these advantages, theoretical investigations of the DBMC protocol have produced extensive results on its convergence and robustness. For discrete-time implementations, prior studies~\cite{discrete1,article1,discrete2,discrete3,discrete5} have established finite-step convergence with explicit iteration bounds based on graph parameters. Robustness under persistent disturbances has also been established via Lyapunov analysis~\cite{discrete4}. In the continuous-time setting, global asymptotic and regional exponential stability have been established using LaSalle’s invariance principle~\cite{article2} and non-smooth Lyapunov functions~\cite{b15}, respectively. However, these results do not ensure convergence within a user-defined time frame, which is often required in hierarchical or real-time applications such as robot path planning for obstacle avoidance~\cite{discrete2}, or route computation in content delivery networks~\cite{b16}. 

To address this, pre-specified finite time (PT) control has been proposed, which ensures exact convergence within a fixed duration, regardless of initial conditions or design parameters~\cite{b17}. Compared to traditional finite time methods, PT control achieves faster and more predictable convergence by applying time-varying scaling functions~\cite{b19}, but may suffer from instability under disturbances~\cite{article4,b21}. To improve robustness, practical pre-specified finite time (PPT) control has been developed~\cite{b22}, which relaxes exact convergence by requiring states to remain within a neighborhood of the target before the deadline. Similar to PT control strategies, this is typically achieved by incorporating time-base generators (TBGs)~\cite{b23}.

Our work builds on the recent study~\cite{article3}, which introduced a PT control strategy to ensure the convergence of DBMC within a user-defined time frame. This
was achieved by employing a time-varying gain instead of
a constant gain used in the nominal DBMC, which becomes
unbounded at a pre-specified time instant.  Since the state error also vanishes at this instant, it has been established that the DBMC dynamics remain bounded and continuously differentiable under the PT control strategy \cite{article3}. However, in the presence of disturbances, the state of DBMC may deviate from the stationary point, causing the amplifying gain to destabilize the system near the prescribed deadline, leading to erratic DBMC dynamics. To address these issues, this paper proposes terminating the PT-controlled DBMC at a time strictly before the prescribed deadline, thereby avoiding unstable behaviors under disturbances while still guaranteeing the shortest path. Towards this end, we first derive refined error bounds for the perturbed DBMC protocol. We then establish a relationship between error bounds and correct path identification, and finally propose a practical early termination strategy that ensures accurate path selection.

In the rest of the paper, Section \ref{sec:pre} introduces the necessary graph-theoretic background and reviews the pre-specified finite time DBMC protocol proposed in~\cite{article3}; Section~III presents the main theoretical results; Section~IV provides supporting simulations; and Section~V concludes this article.

\section{Preliminaries}\label{sec:pre}

\subsection{Graph Theory}

Consider a directed graph \( G = (V, E) \), where \( V = \{0, 1, \dots, n\} \) is the set of nodes, and \( E \subseteq V \times V \) is the set of directed edges. Each edge \( (i, j) \in E \) indicates that node \( i \) can move directly to node \( j \), and is associated with a positive weight \( w_{ij} > 0 \). For each node \( i \), we define its out-neighbor set as \( \mathcal{N}_i = \{ j \in V \mid (i,j) \in E \} \).

A path from node \( i \) to node \( j \) is a finite sequence of nodes \( \{i_0, i_1, \dots, i_\ell\} \) such that (i) \( i_0 = j \), \( i_\ell = i \); (ii) \( (i_k, i_{k-1}) \in E \) for all \( 1 \leq k \leq \ell \); and (iii) all nodes in the sequence are distinct. The length of a path is defined as \( \sum_{k=1}^\ell w_{i_k i_{k-1}} \). Self-loops are not allowed, i.e., no node is connected to itself by an edge.


Let \( S_1 \subsetneqq V \) be a nonempty set of source nodes, and let \( S_2 = V \setminus S_1 \) be a nonempty set of non-source nodes. Our objective is to find a path from each node \( i \) to some node in \( S_1 \) with minimal length.

\begin{assumption}
\label{ass:Graph}
Graph G is directed, and for all \( i \in V \), there exists a directed path from \( i \) to some \( j \in S_1 \).
\end{assumption}

Under Assumption~\ref{ass:Graph}, we define the length of the shortest path from node \( i \) to its nearest source as \( p_i \). Then \( p_i \) satisfies the following recursive relation according to Bellman's optimality principle~\cite{bellman1958}:
\begin{align}
p_i =
\begin{cases}
0, & i \in S_1, \\
\min\limits_{j \in \mathcal{N}_i} \{ p_j + w_{ij} \}, & i \in S_2.
\end{cases}
\label{eq:bellman}
\end{align}

To further support the analysis in later sections, we introduce the following structural definitions related to the graph.

\begin{definition}
A true parent node of node \( i \) is an out-neighbor \( j \in \mathcal{N}_i \) such that \(p_j + w_{ij} \) in (\ref{eq:bellman}) is minimized. Since \(i\) may have multiple parents, we denote the set of all such nodes by \( \mathcal{P}_i \).
\label{def:parent}
\end{definition}
\begin{definition}\label{def:diameter}
Given a graph \( G = (V, E) \), its effective diameter \( \mathcal{D}(G) \) is defined as the maximum number of nodes in any sequence ending at a source node, where each node in such a sequence is a true parent of its predecessor.
\end{definition}

\subsection{Problem Statement}
We consider the DBMC protocol under the PT control strategy proposed in~\cite{article3}, operating over the time interval \( [0, T_s) \). From~\cite{article3}, the system dynamics are defined as
\begin{align}
\dot{x}_i(t) =
\begin{cases}
0, & i \in S_1, \\
-\eta(t)\left(x_i(t) - \min\limits_{j \in \mathcal{N}_i} \{ x_j(t) + w_{ij} \} \right), & i \in S_2,
\end{cases}
\label{eq:nominal-DBMC}
\end{align}
where \( x_i(t) \) is the state of node \( i \), and \( \eta(t) \) is a time-varying gain function defined as
\begin{align}
\eta(t) = \gamma + 2\frac{\dot{\mu}(t)}{\mu(t)}, \quad
\mu(t) = \left(\frac{T_s}{T_s - t}\right)^{1+h},
\end{align}
where \( h>-1/2 \) and \( \gamma >0\) are the design parameters.

Similar to \cite{article3}, we need the following assumption.
\begin{assumption}
\label{ass:overestimated}
The initial states of the non-source nodes are overestimated, that is, \( x_i(0) \geq p_i \) for all \( i \in S_2 \). Moreover, \( x_i(0) = 0 \) for all \( i \in S_1 \).
\end{assumption}

As shown in~\cite{article3}, under Assumptions~\ref{ass:Graph} and~\ref{ass:overestimated}, \(x_i(t)\) in \eqref{eq:nominal-DBMC} converges to \(p_i\), the length of the shortest path from \(i\) to its nearest source in a finite time. Specifically, 
\begin{align}
\!\!\! \lim_{t \to T_s^-} x_i(t) = p_i, \ \lim_{t\to T_s^-}\dot x_i(t)=0.
\end{align}

However, \cite{article3} considers an ideal disturbance-free scenario. In this paper, we extend the above framework by introducing time-varying edge weights of the form \( w_{ij}(t) = w_{ij} + u_{ij}(t) \), where \( u_{ij}(t) \) denotes an asymmetric additive disturbance that further satisfies the following condition.

\begin{assumption}
\label{ass:bound}
The disturbance \( u_{ij}(t) \) is continuous and bounded for all \( (i,j) \in E \), satisfying
\begin{align}
-w_{ij} < -u_{ij}^{-} \leq u_{ij}(t) \leq u_{ij}^{+},
\end{align}
where \( u_{ij}^{-}, u_{ij}^{+} \geq 0 \) are the disturbance bounds. 
\end{assumption}
We use \(u^-, u^+ \geq 0\) to denote uniform bounds of \(u_{ij}(t)\) in Assumption~\ref{ass:bound}, such that
\begin{equation}
- u^- \leq -u^-_{ij}\leq u_{ij}^+ \leq u^+, \quad \forall i \in V,\, j \in \mathcal{N}_i.
\label{eq:uniform}
\end{equation}

In the presence of edge-weight perturbations, \eqref{eq:nominal-DBMC} is now interpreted as:
\begin{align}
\dot{x}_i(t) =
\begin{cases}
0, & i \in S_1, \\-\eta(t)\left(x_i(t) - \min\limits_{j \in \mathcal{N}_i} \{ x_j(t) + w_{ij}(t) \} \right), & i \in S_2.
\end{cases}
\label{eq:pert-DBMC}
\end{align}

We define \(e_i(t) = x_i(t) - p_i\) to measure the deviation from \( p_i \). Naturally, \( \dot{e}_i(t) = \dot{x}_i(t) \). The primary objective of this paper is not to eliminate the deviation \( e_i(t) \), but to identify the shortest path for each non-source node by terminating the disturbed DBMC~\eqref{eq:pert-DBMC} at a specific time before the predefined time instant. In doing so, we avoid infinite gain, even when \( e_i(t) \) has not yet vanished. Note that Assumption~\ref{ass:overestimated} can now be equivalently expressed as \( e_i(0) \geq0\) for all \(i \in S_2 \) and \(e_i(0)=0\) for all \(i\in S_1\).

Another definition is required to facilitate our subsequent analysis. 
\begin{definition}
A current parent node of node \( i \) is an out-neighbor \( j \in \mathcal{N}_i \) that minimizes \( x_j(t) + w_{ij} + u_{ij}(t) \) in (\ref{eq:pert-DBMC}). Since such \(j\)  may not be unique, we use \( \mathcal P_i(t) \) to denote the set of all current parent nodes.
\label{def:currentparent}
\end{definition}

\section{Main Results}

We now present our analysis of the perturbed DBMC~\eqref{eq:pert-DBMC} over the time interval \( 0 \leq t < T_s \). Assumptions~\ref{ass:Graph}--\ref{ass:bound} are assumed to hold throughout this section.

\begin{remark}
\label{re:path}
 Since Assumption 1 guarantees that every node is connected to the source set \( S_1 \), it follows that for each \(i\in S_2\), there exists a sequence of nodes \(\{i_0, i_1, \dots, i_\ell\}\) that forms the shortest path from node \(i\) to its nearest source such that: (i) \( i_0 \in S_1\), \( i_\ell = i\); (ii) \( \ell\leq {\mathcal D}(G)-1\); (iii) \(i_k \in \mathcal{P}_{i_{k+1}}\) for all \(0 \leq k \leq \ell-1\); and (iv) \(\sum_{k = 1}^{\ell } w_{i_{k}i_{k - 1}} = p_i\), with \(p_i\) defined in (\ref{eq:bellman}).
\end{remark}
This observation implies that the analysis of any node \(i \in S_2\) can be reduced to examining the evolution of the nodes along a node sequence ending at \(i\), as described in Remark~\ref{re:path}. Therefore, \emph{in the subsequent analysis, we conduct analysis on the sequence of nodes as in Remark~\ref{re:path} and use \(i_k\) to index the node therein.}

Unless explicitly stated otherwise, all proofs of the stated results are provided in the Appendix.

\subsection{Error Bounds under Bounded Disturbances}

We first introduce an auxiliary function \( \phi(t) \) defined by
\begin{align}
\phi(t) =
\exp\left(\int_0^t \eta(s)\, \mathrm{d}s\right) = e^{\gamma t} \left(\frac{T_s}{T_s - t}\right)^{2 + 2h},
\label{eq:phi}
\end{align}
which streamlines the notation in subsequent computations.

The following lemma provides an upper bound for \(e_{i_\ell}(t)\).
\begin{lemma}\label{lem:upper}
Consider the sequence of nodes introduced in Remark~\ref{re:path}. The state error of each node in this sequence obeys
\begin{align}
e_{i_\ell}(t) \leq \mathcal{E}_{i_\ell}(t) + \sum_{k = 0}^{\ell} u_{i_k i_{k-1}}^{+},~ \forall \ell \in \{0, \cdots, \mathcal{D}(G) -1 \}
\label{eq:lem1:eq1}
\end{align}
where \(u_{i_0i_{-1}}^{+} := 0\), \(u_{i_k i_{k-1}}^{+}\) with \(k \in \{1, \cdots, \mathcal{D}(G) - 1 \}\) is defined in Assumption~\ref{ass:bound}, \(\mathcal{D}(G)\) is defined in Definition~\ref{def:diameter}, and 
\begin{align}
\mathcal{E}_{i_\ell}(t) = \sum_{k = 0}^{\ell} \frac{e_{i_k}(0)(\ln \phi(t))^{\ell-k}}{\phi(t)(\ell-k)!}. 
\label{eq:mathcal-E}
\end{align}
\end{lemma}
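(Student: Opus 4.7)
The plan is to prove the bound by induction on $\ell$, the position in the node sequence of Remark~\ref{re:path}. The base case $\ell=0$ is immediate: since $i_0\in S_1$, Assumption~\ref{ass:overestimated} gives $e_{i_0}(t)=0$, which matches $\mathcal{E}_{i_0}(t)=e_{i_0}(0)/\phi(t)=0$ combined with the convention $u^+_{i_0 i_{-1}}=0$.

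For the induction step, the crucial observation is that $i_{\ell-1}\in\mathcal{P}_{i_\ell}$ implies $p_{i_\ell}=p_{i_{\ell-1}}+w_{i_\ell i_{\ell-1}}$, while the min in~\eqref{eq:pert-DBMC} is at most the single contribution from $j=i_{\ell-1}$. Substituting this bound into $\dot{x}_{i_\ell}(t)=\dot{e}_{i_\ell}(t)$, rewriting in terms of the errors, and using $u_{i_\ell i_{\ell-1}}(t)\le u^+_{i_\ell i_{\ell-1}}$ from Assumption~\ref{ass:bound} yields the linear differential inequality
\[
\dot{e}_{i_\ell}(t)+\eta(t)\,e_{i_\ell}(t)\le\eta(t)\bigl(e_{i_{\ell-1}}(t)+u^+_{i_\ell i_{\ell-1}}\bigr).
\]
Invoking the inductive hypothesis further upper-bounds the right-hand side by $\eta(t)\bigl[\mathcal{E}_{i_{\ell-1}}(t)+\sum_{k=0}^{\ell}u^+_{i_k i_{k-1}}\bigr]$.

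Next I would multiply by the integrating factor $\phi(t)$, which is positive and satisfies $\phi'(t)=\eta(t)\phi(t)$ with $\phi(0)=1$, turning the inequality into $\tfrac{d}{dt}[\phi(t)e_{i_\ell}(t)]\le\phi'(t)\bigl[\mathcal{E}_{i_{\ell-1}}(t)+\sum_{k=0}^{\ell}u^+_{i_k i_{k-1}}\bigr]$. Integrating over $[0,t]$, the disturbance piece reduces via $\int_0^t\phi'(s)\,ds=\phi(t)-1$, while the term involving $\mathcal{E}_{i_{\ell-1}}(s)$ unfolds through the substitution $u=\ln\phi(s)$, since $\phi'(s)/\phi(s)\,ds=du$. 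This converts each summand into the polynomial integral $\int_0^{\ln\phi(t)}u^{\ell-1-k}\,du=(\ln\phi(t))^{\ell-k}/(\ell-k)$, which combined with the factor $1/(\ell-1-k)!$ from $\mathcal{E}_{i_{\ell-1}}$ gives $(\ln\phi(t))^{\ell-k}/(\ell-k)!$. Adding the initial-condition term $e_{i_\ell}(0)/\phi(t)$ then supplies the missing $k=\ell$ contribution, so that the bound reconstructs $\mathcal{E}_{i_\ell}(t)$ exactly. The disturbance remainder carries the factor $(\phi(t)-1)/\phi(t)\le 1$, producing the stated $\sum_{k=0}^\ell u^+_{i_k i_{k-1}}$ bound.

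The main obstacle is the integral bookkeeping in the inductive step: one has to verify that the logarithmic substitution cleanly shifts the Poisson-like sum from index $\ell-1$ to $\ell$ with the correct factorials, and that the $k=\ell$ slot is filled exactly by the initial condition rather than by the integral. A secondary concern is that the min in~\eqref{eq:pert-DBMC} is only locally Lipschitz, so the differential inequality must be interpreted in the integrated sense; however, because $x_{i_\ell}(t)$ is absolutely continuous and $\eta(t)$ is continuous on $[0,T_s)$, the standard comparison argument via the integrating factor still goes through without modification.
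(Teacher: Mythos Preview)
Your proposal is correct and follows essentially the same route as the paper: induction on $\ell$, bounding the $\min$ by the contribution from the true parent $i_{\ell-1}$, applying the integrating factor $\phi$, and evaluating the resulting integrals via the substitution $u=\ln\phi(s)$ so that the Poisson-type sum shifts from $\mathcal{E}_{i_{\ell-1}}$ to $\mathcal{E}_{i_\ell}$ with the initial datum filling the $k=\ell$ slot. The paper's proof differs only in cosmetic indexing (it advances from $\ell$ to $\ell+1$) and in citing the comparison principle explicitly rather than your integrated-sense remark, but the argument is the same.
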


\begin{remark}
\label{re:mathcal-E}
The term \( \mathcal{E}_{i_\ell}(t) \) denotes the upper bound of \( e_{i_\ell}(t) \) in the absence of edge-weight disturbances. Since \( \phi(t) \to +\infty \) as \( t \to T_s^{-} \), and the numerator in (\ref{eq:mathcal-E}) grows only logarithmically in \( \phi \) while the denominator grows polynomially, it follows that \( \mathcal{E}_{i_\ell}(t) \to 0 \) as \( t \to T_s^- \).
\end{remark}

We next establish a lower bound when all disturbances are non-negative.

\begin{lemma}\label{lem:lower}
Consider (\ref{eq:pert-DBMC}). If \( u_{ij}(t) \geq 0 \) for all \( (i, j)\in E \) with \(u_{ij}(t)\) defined in Assumption~\ref{ass:bound}, then \(e_i(t)\geq 0\) for all \(i\in V\).
\end{lemma}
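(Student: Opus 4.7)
The plan is to establish positive invariance of the cone $K = \{e : e_i \geq 0 \text{ for all } i \in V\}$ under the perturbed dynamics \eqref{eq:pert-DBMC}. Since Assumption~\ref{ass:overestimated} places the initial state in $K$, such invariance directly yields the claim. Source nodes are immediate, because $\dot{e}_i(t)=0$ together with $e_i(0)=0$ forces $e_i(t)\equiv 0$ for every $i\in S_1$, so only the non-source nodes require attention.

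The core step is to analyze the vector field on the boundary of $K$. Suppose at some time $t^*\in[0,T_s)$ one has $e_{i^*}(t^*)=0$ for an $i^*\in S_2$ while $e_j(t^*)\geq 0$ for every $j\in V$. Rewriting \eqref{eq:pert-DBMC} via $x_i = e_i + p_i$ gives
\[
\dot{e}_{i^*}(t^*) = -\eta(t^*)\Big(p_{i^*} - \min_{j\in\mathcal{N}_{i^*}}\{e_j(t^*)+p_j+w_{i^*j}+u_{i^*j}(t^*)\}\Big).
\]
Bellman's optimality \eqref{eq:bellman} guarantees $p_j + w_{i^*j}\geq p_{i^*}$ for each $j\in\mathcal{N}_{i^*}$, and combining this with $e_j(t^*)\geq 0$ together with the sign condition $u_{i^*j}(t^*)\geq 0$ shows the minimum is at least $p_{i^*}$. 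Consequently $\dot{e}_{i^*}(t^*)\geq 0$, so the vector field never points strictly out of $K$ at the boundary.

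The main obstacle is that a non-strict inequality $\dot{e}_{i^*}(t^*)\geq 0$ does not by itself forbid $e_{i^*}$ from dipping below zero through higher-order behaviour when the derivative vanishes. I would close this gap with a perturbation argument: for each $\epsilon>0$, replace $u_{ij}(t)$ by $u_{ij}^{\epsilon}(t):=u_{ij}(t)+\epsilon$, which remains non-negative. Repeating the boundary computation now produces strict slack $\epsilon$, so $\dot{\tilde{e}}_{i^*}^{\epsilon}(t^*)\geq \eta(t^*)\epsilon>0$ at any candidate first escape time, and a standard first-crossing contradiction yields $\tilde{e}_i^{\epsilon}(t)\geq 0$ for every $i$ and every $t\in[0,T_s)$. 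Because the $\min$ operator is globally Lipschitz and $\eta$ is bounded on every compact subinterval of $[0,T_s)$, continuous dependence on parameters gives $\tilde{e}_i^{\epsilon}\to e_i$ uniformly on compact subintervals of $[0,T_s)$ as $\epsilon\to 0$, and non-negativity survives in the limit.
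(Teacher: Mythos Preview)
Your argument is correct. The boundary analysis is right: at any first escape point the Bellman inequality $p_j+w_{i^*j}\geq p_{i^*}$ combined with $e_j\geq 0$ and $u_{i^*j}\geq 0$ forces $\dot e_{i^*}\geq 0$, and your $\epsilon$-shift plus continuous dependence cleanly handles the degenerate case $\dot e_{i^*}=0$. The only detail worth recording explicitly is that the first-crossing contradiction also covers $t^*=0$, since $\eta(0)=\gamma+2(1+h)/T_s>0$; you implicitly assume this but do not state it.

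The paper takes a different and shorter route. It does not argue invariance of the cone directly; instead it compares \eqref{eq:pert-DBMC} with the \emph{unperturbed} system \eqref{eq:nominal-DBMC} via the comparison principle (larger edge weights make the $\min$ term larger, hence $\dot x_i$ larger, hence $x_i^{\text{pert}}(t)\geq x_i^{\text{nom}}(t)$), and then invokes Lemma~4 of~\cite{article3}, which already asserts $x_i^{\text{nom}}(t)\geq p_i$ under Assumption~\ref{ass:overestimated}. So the paper decomposes the result into ``perturbed dominates nominal'' plus ``nominal stays above $p$'', outsourcing the second step to prior work. Your proof is self-contained and does not need \cite{article3} at all, which is a genuine advantage if one wants the argument to stand alone; the paper's approach is more modular and reuses the existing machinery, at the cost of an external dependency.
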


\begin{proof}
The proof follows from applying the comparison principle~\cite{khalil2002} to Lemma~4 in~\cite{article3}, and is omitted here.
\end{proof}

We now turn to the scenario where negative-weight disturbances may occur. First, consider \( u_{ij}(t) \) bounded proportionally to the corresponding edge weight \( w_{ij} \).

\begin{theorem}
\label{thm:proportional}
Consider (\ref{eq:pert-DBMC}). Suppose that \( u_{ij}(t) \), as defined in Assumption~\ref{ass:bound}, satisfies \( -\alpha_1 w_{ij} \leq u_{ij}(t) \leq \alpha_2 w_{ij} \) with \( 0 \leq \alpha_1, \alpha_2 < 1 \). Then, for all \( \ell \in \{1, \cdots, \mathcal{D}(G) - 1 \} \),
\begin{align}\label{eq:bound}
-\alpha_1  p_{i_\ell} \leq e_{i_\ell}(t) \leq  \mathcal E_{i_\ell}(t)+\alpha_2  p_{i_\ell},
\end{align}
with \(\mathcal{D}(G)\), \(p_{i_\ell}\) and \(\mathcal E_{i_\ell}(t)\) defined in~ Definition \ref{def:diameter}, (\ref{eq:bellman}) and (\ref{eq:mathcal-E}), respectively.
\end{theorem}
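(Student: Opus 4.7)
The plan is to establish the two inequalities in (\ref{eq:bound}) separately, since the upper bound follows almost directly from Lemma~\ref{lem:upper} while the lower bound requires a new comparison argument.

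For the upper bound, I would specialize Lemma~\ref{lem:upper}. Under the proportional assumption, the disturbance upper bound satisfies $u_{i_k i_{k-1}}^+ \leq \alpha_2 w_{i_k i_{k-1}}$ for each edge in the path. With the convention $u_{i_0 i_{-1}}^+ := 0$, the tail sum in (\ref{eq:lem1:eq1}) then obeys $\sum_{k=0}^{\ell} u_{i_k i_{k-1}}^{+} \leq \alpha_2 \sum_{k=1}^{\ell} w_{i_k i_{k-1}}$. Invoking property~(iv) of Remark~\ref{re:path}, the remaining sum equals $p_{i_\ell}$, yielding $e_{i_\ell}(t) \leq \mathcal{E}_{i_\ell}(t) + \alpha_2 p_{i_\ell}$.

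For the lower bound, my approach is to recast (\ref{eq:pert-DBMC}) as a perturbation of a nominal system with uniformly \emph{scaled} weights $\tilde{w}_{ij} := (1-\alpha_1) w_{ij}$, whose shortest-path lengths are $\tilde{p}_i = (1-\alpha_1) p_i$ (uniform positive scaling preserves path orderings and therefore preserves the true parent structure in Definition~\ref{def:parent}). I rewrite the actual weight as $w_{ij}(t) = \tilde{w}_{ij} + \tilde{u}_{ij}(t)$ with $\tilde{u}_{ij}(t) := u_{ij}(t) + \alpha_1 w_{ij}$, which is non-negative by the assumption $u_{ij}(t) \geq -\alpha_1 w_{ij}$. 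Since Assumption~\ref{ass:overestimated} gives $x_i(0) \geq p_i \geq \tilde{p}_i$, the initial-overestimation requirement is inherited by the scaled system. Applying Lemma~\ref{lem:lower} to this reinterpreted dynamics then gives $x_i(t) - \tilde{p}_i \geq 0$, i.e., $e_{i_\ell}(t) \geq -\alpha_1 p_{i_\ell}$, as desired.

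The main obstacle is justifying that Lemma~\ref{lem:lower} can be legitimately transplanted to the scaled weights, since it is stated for (\ref{eq:pert-DBMC}) as written. The key observation is that Lemma~\ref{lem:lower} relies solely on the comparison-principle argument applied to the DBMC operator with a non-negative additive disturbance to the edge weights; it does not depend on any specific numerical value of the baseline weights (beyond positivity) nor on the specific initial values (beyond overestimation). Consequently, once I verify that the scaled problem satisfies Assumption~\ref{ass:Graph} (unchanged, since edges are the same), a positivity-preserving analog of Assumption~\ref{ass:bound} with $\tilde{u}_{ij}^- = 0$, and Assumption~\ref{ass:overestimated} with respect to $\tilde{p}_i$, the lemma's conclusion transfers verbatim, closing the proof.
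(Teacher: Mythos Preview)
Your proposal is correct and essentially matches the paper's proof: both obtain the lower bound by rescaling the weights to $\tilde w_{ij}=(1-\alpha_1)w_{ij}$ so that the residual disturbance $\tilde u_{ij}(t)=u_{ij}(t)+\alpha_1 w_{ij}$ is non-negative, verifying that Assumptions~\ref{ass:Graph}--\ref{ass:bound} and the overestimation condition carry over, and then invoking Lemma~\ref{lem:lower}. The only difference is cosmetic---you derive the upper bound directly from Lemma~\ref{lem:upper} on $G$, whereas the paper applies both Lemmas~\ref{lem:upper} and~\ref{lem:lower} on the scaled graph $\tilde G$ and translates back via $\tilde e_i(t)=e_i(t)+\alpha_1 p_i$.
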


\begin{proof}
Let \( \tilde{G} = (V, E) \) be a modified graph that shares the same structure as graph \(G\) but with edge weights defined by \( \tilde{w}_{ij} = (1 - \alpha_1) w_{ij} \). Let \( \tilde{u}_{ij}(t) = \alpha_1 w_{ij} + u_{ij}(t) \), \( \tilde{p}_i = (1 - \alpha_1) p_i \), and \( \tilde{e}_i(t) = x_i(t) - \tilde{p}_i \). Then, we have \( 0 \leq \tilde{u}_{ij}(t) \leq (\alpha_1 + \alpha_2) w_{ij}\). We further define \(\tilde{u}_{ij}^+ = (\alpha_1 + \alpha_2) w_{ij}\).

Since all edge weights are uniformly scaled by the same factor, the shortest paths of node \(i\) and the associated node sequence \( \{i_0, i_1, \dots, i_\ell = i\} \) in \( \tilde{G} \) remain unchanged as in graph \(G\). Consequently, \(\tilde{p}_i\) is the length of the shortest path from \(i\) to its nearest source in graph \(\tilde{G}\), and \(\mathcal{D}(\tilde{G})\), the effective diameter of \(\tilde{G}\), is still \(\mathcal{D}(G)\). Furthermore, since \( \tilde{p}_i \leq p_i \), Assumption~\ref{ass:overestimated} also holds in \( \tilde{G} \). 

Therefore, Lemmas~\ref{lem:upper} and~\ref{lem:lower} can be directly applied to \( \tilde{G} \), i.e., for all \(\ell \in \{1, \cdots, \mathcal{D}(G) -1 \}\),
\begin{equation}
0\leq\tilde e_{i_\ell}(t)\leq \mathcal E_{i_\ell}(t)+\sum_{k=1}^\ell \tilde u_{i_ki_{k-1}}^+.\label{eq:thm1:eq1}
\end{equation}
Substituting 
\(\sum_{k=1}^\ell \tilde{u}_{i_k i_{k-1}}^+ = (\alpha_1 + \alpha_2) p_{i_\ell}\) 
and 
\(\tilde{e}_i(t) = e_i(t) + \alpha_1 p_i\) 
into~\eqref{eq:thm1:eq1} yields (\ref{eq:bound}).
\end{proof}

We proceed to analyze the general case as described in Assumption~\ref{ass:bound}. To this end, we define a special graph \( G^{-} \) as follows.

\begin{definition}
\label{def:G-}
Let \( G^{-} = (V, E) \) denote a modified graph that shares the same topology as \( G \), but with edge weights defined by
\( w_{ij}^- = w_{ij} - u_{ij}^- \), with \( u_{ij}^- \) defined in Assumption~\ref{ass:bound}.  
Since \( G^{-} \) satisfies Assumption~\ref{ass:Graph}, we define \( p_i^- \) as the length of the shortest path from node \( i \) to the source set \( S \) in \( G^{-} \).
\end{definition}

\begin{theorem}
\label{thm:general}
Consider (\ref{eq:pert-DBMC}). For all \( \ell \in \{1, \cdots, \mathcal{D}(G)-1 \} \),
\begin{align}
- (\mathcal{D}({G}^-)-1) u^{-} \leq e_{i_\ell}(t) \leq \ell u^{+} + \mathcal{E}_{i_\ell}(t), \label{eq:thm2:estimate}
\end{align}
where \( u^- , u^+ \) are defined in (\ref{eq:uniform}), \( \mathcal{D}(\cdot) \) in Definition~\ref{def:diameter}, \( \mathcal{E}_{i_\ell}(t) \) in (\ref{eq:mathcal-E}), and \( G^- \) in Definition~\ref{def:G-}.
\end{theorem}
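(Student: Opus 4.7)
The plan is to establish the two one-sided bounds separately. The upper bound follows almost directly from Lemma~\ref{lem:upper} together with the uniform disturbance bound in~\eqref{eq:uniform}. For the lower bound, the key idea is to reuse Lemma~\ref{lem:lower} on a cleverly chosen auxiliary graph, because Lemma~\ref{lem:lower} requires non-negativity of disturbances, whereas Assumption~\ref{ass:bound} allows them to be negative. The natural remedy is to absorb the negative part into the edge weights, which is precisely what the graph $G^-$ from Definition~\ref{def:G-} accomplishes.

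For the upper bound, I would simply invoke Lemma~\ref{lem:upper}, which yields
\[
e_{i_\ell}(t) \leq \mathcal{E}_{i_\ell}(t) + \sum_{k=0}^{\ell} u_{i_k i_{k-1}}^{+}.
\]
Recalling the convention $u_{i_0 i_{-1}}^{+} := 0$ and applying the uniform bound $u_{i_k i_{k-1}}^{+} \leq u^{+}$ from~\eqref{eq:uniform} to the remaining $\ell$ terms, the sum is at most $\ell u^{+}$, giving the right-hand inequality of~\eqref{eq:thm2:estimate}.

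For the lower bound, I would shift the weights and disturbances simultaneously. Define on $G^-$ the shifted disturbance $\tilde{u}_{ij}(t) = u_{ij}(t) + u_{ij}^-$, which is non-negative by Assumption~\ref{ass:bound}. The key observation is that the perturbed DBMC~\eqref{eq:pert-DBMC} generates identical trajectories on $G$ and on $G^-$, because $w_{ij}^- + \tilde{u}_{ij}(t) = w_{ij} + u_{ij}(t)$ inside the $\min$ operator; Assumption~\ref{ass:overestimated} continues to hold on $G^-$ since $p_i^- \leq p_i$. Applying Lemma~\ref{lem:lower} to the system read on $G^-$ therefore yields $x_i(t) \geq p_i^-$, hence $e_i(t) \geq p_i^- - p_i$. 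To convert this into the stated bound, let $P^\star$ be a shortest path of $i_\ell$ in $G^-$; by Definition~\ref{def:diameter}, $P^\star$ contains at most $\mathcal{D}(G^-)-1$ edges. Then
\[
p_{i_\ell}^- = \!\!\!\sum_{(j,k)\in P^\star}\!\!\!(w_{jk}-u_{jk}^-) \geq p_{i_\ell} - \!\!\!\sum_{(j,k)\in P^\star}\!\!\!u_{jk}^- \geq p_{i_\ell} - (\mathcal{D}(G^-)-1)u^-,
\]
where the first inequality uses optimality of $p_{i_\ell}$ in $G$, and the last uses~\eqref{eq:uniform}. This establishes the left-hand inequality of~\eqref{eq:thm2:estimate}.

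The main obstacle, if any, is conceptual rather than computational: one must recognize that the change of graph from $G$ to $G^-$ leaves the state trajectory $x_i(t)$ invariant while converting a signed disturbance into a non-negative one, allowing Lemma~\ref{lem:lower} to apply. Once this reduction is in place, the remaining work is the short combinatorial estimate of $p_{i_\ell}^- - p_{i_\ell}$ via the effective diameter of $G^-$, which is routine.
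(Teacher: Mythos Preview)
Your proof is correct and follows essentially the same route as the paper: the upper bound comes straight from Lemma~\ref{lem:upper} with the uniform bound~\eqref{eq:uniform}, and the lower bound passes to the auxiliary graph $G^-$, applies Lemma~\ref{lem:lower} to obtain $x_i(t)\geq p_i^-$, and then uses the same combinatorial estimate $p_{i_\ell}^- \geq p_{i_\ell} - (\mathcal{D}(G^-)-1)u^-$. The only minor difference is in how $x_i(t)\geq p_i^-$ is reached: the paper introduces an auxiliary nominal trajectory $y_i(t)$ on $G^-$ and uses the comparison principle to get $x_i(t)\geq y_i(t)\geq p_i^-$, whereas you observe directly that the perturbed dynamics on $G$ coincide with a non-negatively perturbed system on $G^-$, which lets you invoke Lemma~\ref{lem:lower} without the intermediate comparison step.
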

\begin{proof}
The upper bound follows directly from Lemma~1. As for the lower bound, consider the nominal DBMC~\eqref{eq:nominal-DBMC} on graph \(G^-\), and let \(y_i(t)\) denote the state of node \(i\) with the same initial condition as in the perturbed DBMC~\eqref{eq:pert-DBMC}. Since \( {w}_{ij}^- \leq w_{ij}(t) \), the comparison principle~\cite{khalil2002} implies \( y_i(t) \leq x_i(t) \), with \(x_i(t)\) defined in~\eqref{eq:pert-DBMC}. Moreover, since Assumption~\ref{ass:overestimated} also holds in \( G^- \) under~\eqref{eq:nominal-DBMC}, applying Lemma~2 yields 
\begin{align}
x_i(t) \geq y_i(t) \geq p_i^-.\label{eq:thm2:estimateeq1}
\end{align}

Next, consider the shortest path of \( i \) to its nearest source, denoted by \(\{ i_0, i_1, \dots, i_{\ell} = i \}\), with \( i_0 \in S_1 \) and \( \ell \leq \mathcal{D}({G}^-)-1 \). Since \( p_i \) denotes the shortest path length of \(i\) in \( G \) and \(w_{ij}^-\geq w_{ij}-u^-\) by~\eqref{eq:uniform}, we have
\begin{align}
{p}_i^- = \sum_{k=1}^\ell {w}_{i_k i_{k-1}}^- \notag
&\geq \sum_{k=1}^\ell w_{i_k i_{k-1}} - (\mathcal{D}({G}^-)-1) u^- \\
&\geq p_i - (\mathcal{D}({G}^-)-1) u^-. \label{eq:thm2:estimateeq2}
\end{align}

Substituting~\eqref{eq:thm2:estimateeq2} into~\eqref{eq:thm2:estimateeq1} yields \( x_i(t) \geq p_i - (\mathcal{D}(G^-)-1) u^- \), which completes the proof.
\end{proof}

\subsection{Finite-Gain Behavior under Early Termination}
Although \cite{article3} proves that \eqref{eq:nominal-DBMC} achieves predefined finite time stability, ensuring \( x_i(t) = p_i \) for all \( t \geq T_s \), this result hinges on a stringent condition that \( \dot{x}_i(t) \to 0 \) as \( t \to T_s^- \), which generally does not hold in the presence of disturbances.

To address this issue, this subsection introduces a time \( t_s \), slightly earlier than the predefined time instant \( T_s \), such that terminating the disturbed DMBC \eqref{eq:pert-DBMC} at \( t_s \) yields the shortest path for each non-source node. Specifically, each path is constructed according to Bellman’s optimality principle~\eqref{eq:bellman}, by recursively tracing the current parent node of a given node (see the path reconstruction procedure described in Algorithm~1 of~\cite{article1} for further details). Towards this end, we first introduce the following definition to determine whether the perturbed DBMC~\eqref{eq:pert-DBMC} correctly identifies the shortest path at time \( t_s \).

\begin{definition}
We say that the perturbed DBMC (\ref{eq:pert-DBMC}) correctly identifies the shortest path for all \( i \in S_2 \) at time \( t_s \) if every current parent node of \( i \) at time \( t_s \) is also a true parent node of \( i \). That is,
\begin{align}
\mathcal{P}_i(t_s) \subset \mathcal{P}_i, \quad \forall i \in S_2,
\label{eq:path}
\end{align}
with \( \mathcal{P}_i \) and \( \mathcal{P}_i(t_s) \) defined in Definitions~\ref{def:parent} and~\ref{def:currentparent}, respectively.
\end{definition}
Note that when~\eqref{eq:path} is satisfied, each non-source node indeed finds the shortest path towards the nearest source, even though the associated path length estimate may be inaccurate.

To facilitate the subsequent analysis, we further introduce a mild and practical assumption, which is assumed to hold throughout this subsection.
\begin{assumption}
  We assume that the graph \( G = (V, E) \) admits a minimum path length gap \( \zeta > 0 \), defined as the smallest difference between the length of the shortest path and that of the second shortest path, taken over all non-source nodes \( i \in S_2 \).
\label{def:zeta}
\end{assumption}
\begin{remark}
Assumption~\ref{def:zeta} is typically satisfied in practical scenarios or implementations. For instance, in hop-count-based networks~\cite{hop-count1,hop-count2}, we have \( \zeta = 1 \). Similarly, in digital implementations where all edge weights are represented with at most \( d \) decimal digits, \( \zeta \) can be set to \( 10^{-d} \).
\end{remark}

The following lemma provides a sufficient condition for the correct identification of any shortest path.

\begin{lemma}
Consider the disturbance \( u_{ij}(t) \) as specified in Assumption \ref{ass:bound}, suppose \( u^- + u^+ < \zeta \) with \( \zeta \) defined in Assumption~\ref{def:zeta} and \(u^-, u^+\) defined in (\ref{eq:uniform}). If \( |e_i(t_s)| < (\zeta - u^- - u^+)/2 \) for all \( i \in V \), then the perturbed DBMC (\ref{eq:pert-DBMC}) correctly identifies the shortest path for each non-source node at time \( t_s \).
\end{lemma}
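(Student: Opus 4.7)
My plan is to argue by contradiction. Suppose there is a non-source node $i\in S_2$ and a current parent $j\in\mathcal{P}_i(t_s)$ that is \emph{not} a true parent of $i$, and fix any true parent $j^*\in\mathcal{P}_i$. Then $p_{j^*}+w_{ij^*}=p_i<p_j+w_{ij}$, and because $p_j+w_{ij}$ is at least the second-smallest value of $p_k+w_{ik}$ over $k\in\mathcal{N}_i$, the minimum-gap Assumption~\ref{def:zeta} applied at node $i$ upgrades the strict inequality to
\begin{equation}
(p_j+w_{ij})-(p_{j^*}+w_{ij^*})\ \geq\ \zeta.\label{eq:plan:gap}
\end{equation}

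Next, I would rewrite the inequality defining $j$ as a current parent from Definition~\ref{def:currentparent}, $x_j(t_s)+w_{ij}+u_{ij}(t_s)\le x_{j^*}(t_s)+w_{ij^*}+u_{ij^*}(t_s)$, in terms of the error $e_k(t_s)=x_k(t_s)-p_k$. Substituting the hypothesis $|e_k(t_s)|<(\zeta-u^--u^+)/2$ for $k\in\{j,j^*\}$ together with $u_{ij}(t_s)\ge -u^-$ and $u_{ij^*}(t_s)\le u^+$ (both guaranteed by Assumption~\ref{ass:bound} and~\eqref{eq:uniform}), a short arithmetic step yields
\[
(p_j+w_{ij})-(p_{j^*}+w_{ij^*})\ <\ (\zeta-u^--u^+)+(u^-+u^+)\ =\ \zeta,
\]
which directly contradicts~\eqref{eq:plan:gap}. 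Hence no such $j$ can exist, so $\mathcal{P}_i(t_s)\subset\mathcal{P}_i$ for every $i\in S_2$ as required.

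Since the argument is essentially a careful accounting of the error and disturbance budgets, I do not anticipate any genuine technical obstacle. The only subtle point is that the two-sided state error bound has to be applied \emph{twice}, once to $x_j(t_s)$ and once to $x_{j^*}(t_s)$, so that the two halves $(\zeta-u^--u^+)/2$ combine to $\zeta-u^--u^+$, while the two edge perturbations $u_{ij}(t_s)$ and $u_{ij^*}(t_s)$ together contribute the remaining $u^-+u^+$; the two contributions exactly exhaust the slack $\zeta$. This matching clarifies both the factor of $2$ in the denominator of the error tolerance and the standing requirement $u^-+u^+<\zeta$ that keeps the tolerance strictly positive.
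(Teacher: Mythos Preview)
Your proposal is correct and follows essentially the same approach as the paper's proof: both compare a true parent $j^*\in\mathcal{P}_i$ with a non-parent $j\notin\mathcal{P}_i$, rewrite the perturbed minimization in terms of $e_j,e_{j^*}$ and the disturbance bounds, and use the gap $\zeta$ from Assumption~\ref{def:zeta} to conclude. The only cosmetic difference is that you phrase it as a contradiction whereas the paper argues directly that $x_j(t_s)+w_{ij}(t_s)>x_{j^*}(t_s)+w_{ij^*}(t_s)$.
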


\begin{proof}

Consider any \( j \notin \mathcal{P}_i \) and any \( j^\ast \in \mathcal{P}_i \) with \(i \in S_2\). To ensure correct path identification, it suffices to show that \(x_j(t_s) + w_{ij}(t_s) > x_{j^\ast}(t_s) + w_{ij^\ast}(t_s)\), or equivalently,
\begin{align}
p_j + e_j(t_s) + w_{ij} + u_{ij}(t_s) > p_{j^\ast} + e_{j^\ast}(t_s) + w_{ij^\ast} + u_{ij^\ast}(t_s). \label{eq:lem3:eq2}
\end{align}

To ensure~\eqref{eq:lem3:eq2} holds for all admissible disturbances satisfying \( -u^- \leq u_{ij}(t), u_{ij^\ast}(t) \leq u^+ \), a sufficient condition is
\begin{align}
p_j + w_{ij} - p_{j^\ast} - w_{ij^\ast} > e_{j^\ast}(t_s) - e_j(t_s) + u^- + u^+. \label{eq:lem3:eq3}
\end{align}

By the definition of \( \zeta \), we have
\begin{align}
p_j + w_{ij} - p_{j^\ast} - w_{ij^\ast} \geq \zeta.
\end{align}

Therefore, inequality~\eqref{eq:lem3:eq3} is guaranteed if
\begin{align}
\zeta > e_{j^\ast}(t_s) - e_j(t_s) + u^- + u^+. \label{eq:lem3:eq4}
\end{align}

To ensure this for all \( i \in S_2 \), we require \(|e_i(t_s)| < (\zeta - u^- - u^+)/2,\) which completes the proof.
\end{proof}

The following lemma provides a tractable upper bound for \( \mathcal{E}_{i_\ell}(t) \) defined in~\eqref{eq:mathcal-E}, which will be used to calculate \( t_s \).

\begin{lemma}
Consider \(\mathcal E_{i_\ell}(t)\) defined in (\ref{eq:mathcal-E}), there holds
\begin{align}
\mathcal E_{i_\ell}(t) <
\chi_0\dfrac{q^\ell-1}{q-1}\left(\dfrac{T_s - t}{T_s}\right)^{(2h+2)(1-1/q)}, \label{eq:lem4:estimate}
\end{align}
where \( \chi_0 \) denotes the maximum initial state error among all nodes, and \( q > 1 \) is an arbitrary constant.
\end{lemma}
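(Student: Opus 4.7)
The plan is to produce a clean term-by-term bound on $(\ln\phi(t))^{\ell-k}/(\phi(t)(\ell-k)!)$ and then sum the resulting geometric progression. The key observation is that $\phi(t) \geq 1$ on $[0,T_s)$, so $\ln\phi(t) \geq 0$, and the Taylor expansion of the exponential gives
\begin{align*}
\phi(t)^{1/q} = \exp\!\left(\tfrac{1}{q}\ln\phi(t)\right) = \sum_{m=0}^{\infty} \frac{(\ln\phi(t))^m}{q^m\,m!}.
\end{align*}
Because every term in this series is non-negative, dropping all but one term yields $(\ln\phi(t))^m/m! \leq q^m\,\phi(t)^{1/q}$ for each $m \geq 0$, and dividing by $\phi(t)$ gives
\begin{align*}
\frac{(\ln\phi(t))^m}{\phi(t)\,m!} \leq q^m\,\phi(t)^{-(1-1/q)}.
\end{align*}

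Next, because the representative sequence $\{i_0,\dots,i_\ell\}$ in Remark~\ref{re:path} satisfies $i_0 \in S_1$, Assumption~\ref{ass:overestimated} forces $e_{i_0}(0)=0$, so the $k=0$ summand in~\eqref{eq:mathcal-E} vanishes. Bounding the remaining initial errors by $\chi_0$ and substituting the estimate above yields
\begin{align*}
\mathcal E_{i_\ell}(t) \leq \chi_0\,\phi(t)^{-(1-1/q)}\sum_{k=1}^{\ell} q^{\ell-k} = \chi_0\,\frac{q^\ell-1}{q-1}\,\phi(t)^{-(1-1/q)}.
\end{align*}
The final step is to convert $\phi(t)^{-(1-1/q)}$ into the asserted power of $(T_s-t)/T_s$: from~\eqref{eq:phi}, $\phi(t) = e^{\gamma t}(T_s/(T_s-t))^{2+2h} \geq (T_s/(T_s-t))^{2+2h}$, so $\phi(t)^{-(1-1/q)} \leq ((T_s-t)/T_s)^{(2h+2)(1-1/q)}$, producing the claimed bound.

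The strict inequality stated in the lemma comes essentially for free whenever $t>0$, because $\ln\phi(t)>0$ makes the single-term Taylor comparison strict; the only delicate boundary case is $\ell=1$ at $t=0$, which can either be resolved by interpreting $\chi_0$ as a strict maximum or sidestepped since this lemma will later be invoked at a time $t_s>0$. I expect the main conceptual obstacle is spotting the Taylor-series trick that converts the logarithmic numerator into $\phi(t)^{1/q}$ with a geometric factor $q^m$; once that estimate is in hand, the remaining work reduces to a finite geometric sum together with the trivial monotonicity $e^{\gamma t}\geq 1$.
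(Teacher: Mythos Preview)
Your argument is correct and structurally mirrors the paper's proof: both establish the term-wise bound $(\ln\phi)^m/(\phi\,m!)\le q^m\phi^{1/q-1}$, drop the $k=0$ summand via $e_{i_0}(0)=0$, sum the resulting geometric series, and finish with $e^{\gamma t}\ge 1$. The only genuine difference lies in how that term-wise bound is obtained. The paper rewrites $\ln\phi = qm\ln(\phi^{1/(qm)})$, applies $\ln x\le x/e$, and then cancels $(m/e)^m$ against a Stirling-type lower bound $m!>(m/e)^m$. Your Taylor-series trick---isolating the $m$-th term of $\exp\!\big((1/q)\ln\phi\big)$---reaches the identical inequality with less machinery and no appeal to Stirling, which is arguably cleaner. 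Both proofs share the same edge-case wrinkle at $t=0$ that you already flagged (the paper's ``holds trivially'' remark for the exponent-zero term is likewise non-strict there), and both are ultimately applied only at $t_s>0$, where strictness is genuine.
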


We now present a condition under which the perturbed DBMC~\eqref{eq:pert-DBMC} correctly identifies the shortest path when terminated at an appropriately selected time \( t_s \).

\begin{theorem}
\label{thm:ts}
Consider (\ref{eq:pert-DBMC}), with \(u^-, u^+\) and \(u_{ij}(t) \) defined in (\ref{eq:uniform}). Suppose the following condition holds:
\begin{align}
(u^- + u^+)/2 + u_g < \zeta/2,\label{eq:thm3:condition1}
\end{align}
where \( \zeta \) is defined in Assumption~\ref{def:zeta}, and \( u_g \) is given by
\begin{align}
u_g = \max\left\{ (\mathcal{D}(G^-) - 1) u^-,\, (\mathcal{D}(G) - 1) u^+ \right\},
\end{align}
with \( \mathcal{D}(\cdot) \) and   \( G^- \) defined in Definitions~\ref{def:diameter} and~\ref{def:G-}, respectively.
Then the perturbed DBMC (\ref{eq:pert-DBMC}) correctly identifies the shortest path at time \( t_s \), if
\begin{align}
t_s \geq T_s \left( 1 - \sqrt[(2h + 2)(1 - 1/q)]{\frac{(\zeta - u^- - u^+)/2 - u_g}{\chi_0(q^{\mathcal D(G)-1}-1)/(q-1)}} \right),
\label{eq:thm3:condition}
\end{align}
where \( q > 1 \) is an arbitrary constant.
\end{theorem}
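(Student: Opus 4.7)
The plan is to assemble the theorem from three ingredients already on the table: Lemma~3, which converts the claim ``correct path identification at $t_s$'' into the pointwise error bound $|e_i(t_s)| < (\zeta - u^- - u^+)/2$; Theorem~\ref{thm:general}, which provides two-sided error bounds on $e_{i_\ell}(t)$ for each node on a shortest-path sequence; and Lemma~4, which furnishes a tractable, time-explicit estimate of the vanishing part $\mathcal{E}_{i_\ell}(t)$. Hypothesis~\eqref{eq:thm3:condition1} is what makes the target threshold $(\zeta - u^- - u^+)/2 - u_g$ strictly positive, which in turn ensures the root extracted in~\eqref{eq:thm3:condition} is well defined.

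First I would combine the upper and lower bounds of Theorem~\ref{thm:general} into a single absolute-value bound. For any $\ell \in \{1,\dots,\mathcal{D}(G)-1\}$, the lower bound contributes $(\mathcal{D}(G^-)-1)u^-$ and the upper bound contributes $\ell u^+ + \mathcal{E}_{i_\ell}(t) \leq (\mathcal{D}(G)-1)u^+ + \mathcal{E}_{i_\ell}(t)$. Taking the maximum over the two sides and using the definition of $u_g$, I obtain the uniform estimate
\[
|e_{i_\ell}(t)| \leq u_g + \mathcal{E}_{i_\ell}(t),
\]
valid for every node in every shortest-path sequence, hence, by the reduction stated after Remark~\ref{re:path}, for every node in $V$.

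Next I would invoke Lemma~4, using the monotonicity $\frac{q^\ell-1}{q-1}\leq\frac{q^{\mathcal{D}(G)-1}-1}{q-1}$ for $\ell \leq \mathcal{D}(G)-1$ and $q>1$, to get
\[
\mathcal{E}_{i_\ell}(t) < \chi_0 \, \frac{q^{\mathcal{D}(G)-1}-1}{q-1}\left(\frac{T_s - t}{T_s}\right)^{(2h+2)(1-1/q)}.
\]
Chaining this with the previous display and invoking Lemma~3, a sufficient condition for correct shortest-path identification at time $t_s$ becomes
\[
u_g + \chi_0 \, \frac{q^{\mathcal{D}(G)-1}-1}{q-1}\left(\frac{T_s - t_s}{T_s}\right)^{(2h+2)(1-1/q)} \leq \frac{\zeta - u^- - u^+}{2}.
\]
Since $h > -1/2$ and $q > 1$ imply the exponent $(2h+2)(1-1/q)$ is strictly positive, the left-hand side is monotonically decreasing in $t_s$. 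Isolating $(T_s - t_s)/T_s$ and raising to the reciprocal power yields exactly the threshold in~\eqref{eq:thm3:condition}.

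I do not expect any conceptual obstacle; the work is in the bookkeeping. The two items to watch are (i) verifying that~\eqref{eq:thm3:condition1} is precisely what makes the numerator $(\zeta - u^- - u^+)/2 - u_g$ positive so that the root in~\eqref{eq:thm3:condition} is real and the inversion of the monotone inequality is legal, and (ii) propagating the bound to all non-source nodes rather than just those on a single pre-selected sequence, which is handled by Remark~\ref{re:path} together with the convention that the analysis is performed along an arbitrary shortest-path sequence terminating at the node of interest. Strict versus non-strict inequalities also require a brief check: Lemma~4 gives a strict inequality, so even when $t_s$ attains equality in~\eqref{eq:thm3:condition}, the hypothesis of Lemma~3 holds strictly, as required.
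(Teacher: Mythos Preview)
Your proposal is correct and follows essentially the same route as the paper: combine the two-sided bound of Theorem~\ref{thm:general} into $|e_{i_\ell}(t)|\leq u_g+\mathcal{E}_{i_\ell}(t)$, apply Lemma~4 with $\ell\leq\mathcal{D}(G)-1$, then invoke Lemma~3 and solve for $t_s$. Your additional remarks on the role of~\eqref{eq:thm3:condition1}, the propagation to all nodes via Remark~\ref{re:path}, and the strict/non-strict inequality check are valid refinements that the paper leaves implicit.
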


\begin{proof}

Substituting the estimate~\eqref{eq:lem4:estimate} from Lemma~4 into Theorem~2 and noting that \( \ell \leq {\mathcal{D}(G)} - 1 \), we obtain
\begin{align}
|e_{i_{\ell}}(t_s)| 
<u_g
+ \chi_0\dfrac{q^{\mathcal D(G)-1}-1}{q-1} \left( \frac{T_s - t_s}{T_s} \right)^{(2h+2)(1 - 1/q)}.
\end{align}
Since Lemma~3 requires \( |e_i(t_s)| < (\zeta-u^--u^+)/2 \) for the perturbed DBMC~\eqref{eq:pert-DBMC} to correctly determine the shortest path, it suffices to ensure the following condition:
\begin{align}
u_g
+ \chi_0\dfrac{q^{\mathcal D(G)-1}-1}{q-1} \left( \frac{T_s - t_s}{T_s} \right)^{(2h+2)(1 - 1/q)}\leq \dfrac{\zeta-u^--u^+}2.\label{eq:thm3:eq1}
\end{align}
Solving~\eqref{eq:thm3:eq1} yields~\eqref{eq:thm3:condition}, thereby completing the proof.
\end{proof}

\section{Simulation}

This section presents numerical simulations to validate the theoretical results established earlier.
Consider a directed hop-count network consisting of 13 nodes, with node~1 being the sole source node, i.e., \(V = \{1, \cdots, 13\}\) and \(S_1 = \{1 \}\). The parameters in DBMC~\eqref{eq:pert-DBMC} are set as \( h = 12 \), \( T_s = 5 \), and \( \gamma = 2 \). To satisfy Assumption~\ref{ass:overestimated}, the initial states are set as \( x_i(0) = 12 \) for all \(i \in S_2\).

Figure~\ref{fig:state-error} illustrates the evolution of state errors of~\eqref{eq:pert-DBMC} with the edge weight \(w_{ij}(t)\) obeying \(w_{ij}(t) \in [0.6w_{ij}, 1.4w_{ij}]\). It can be seen from Figure~\ref{fig:state-error} that the error remains bounded over \( t \in [0,T_s) \). In particular, Figure~\ref{fig:node-8-error} shows that the state error of node~8 is well contained within the theoretical bounds as established by Theorem~\ref{thm:proportional}.

To verify Theorem~\ref{thm:ts}, we consider a smaller disturbance such that \(w_{ij}(t) \in [0.97w_{ij}, 1.03w_{ij}] \). By setting \(q = 3\) in~\eqref{eq:lem4:estimate}, the early termination time defined in Theorem~\ref{thm:ts} is given by \(t_s = 3.1445\). Figure~\ref{fig:node-8-path} demonstrates the resulting path (highlighted in red) of node~\(8\) by terminating (\ref{eq:pert-DBMC}) at \(t_s\). It can be readily verified that the DBMC algorithm has successfully identified a shortest path for node~\(8\), though the estimated length slightly deviates from the stationary value. 

\begin{figure}[htbp]
    \centering
    \includegraphics[width=\linewidth]{./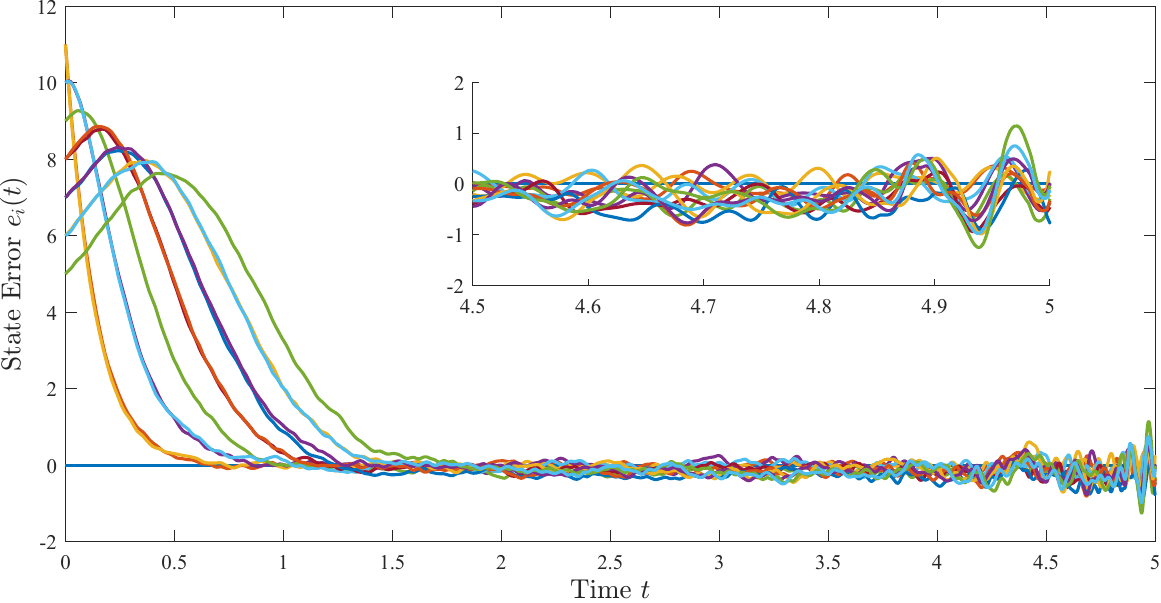}
    \caption{State errors of (\ref{eq:pert-DBMC}) in a hop-count network subjected to \(40\%\) variation in edge weights.}
    \label{fig:state-error}
\end{figure}

\begin{figure}[htbp]
    \centering
    \includegraphics[width=\linewidth]{./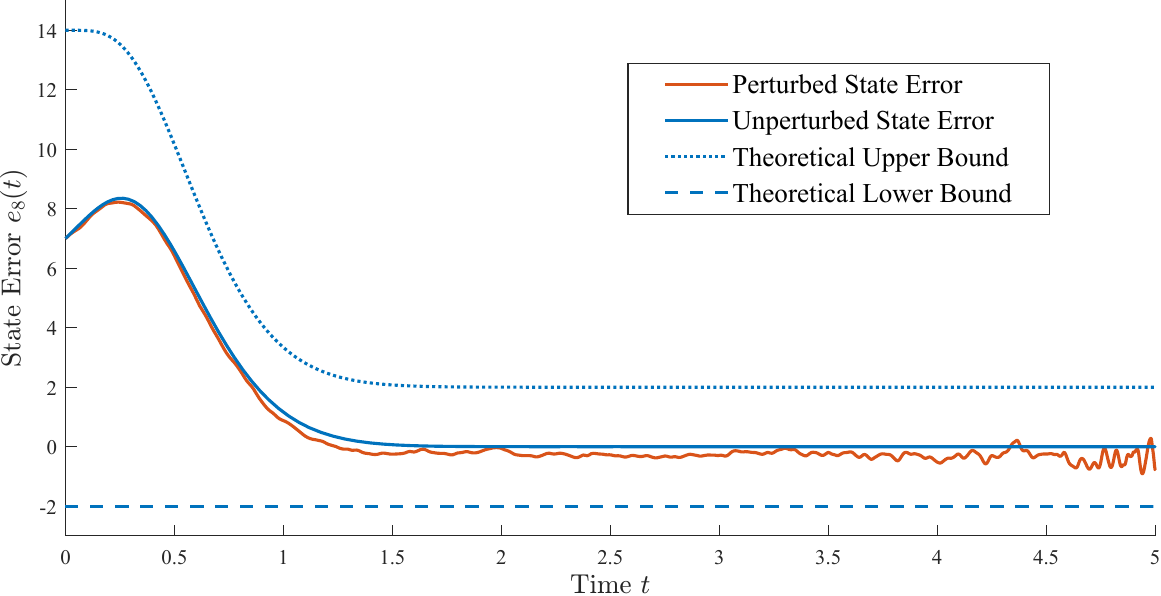}
    \caption{Comparison between the state error of node~\(8\) and the corresponding theoretical bounds.}
    \label{fig:node-8-error}
\end{figure}

\begin{figure}[htbp]
    \centering
    \includegraphics[width=\linewidth]{./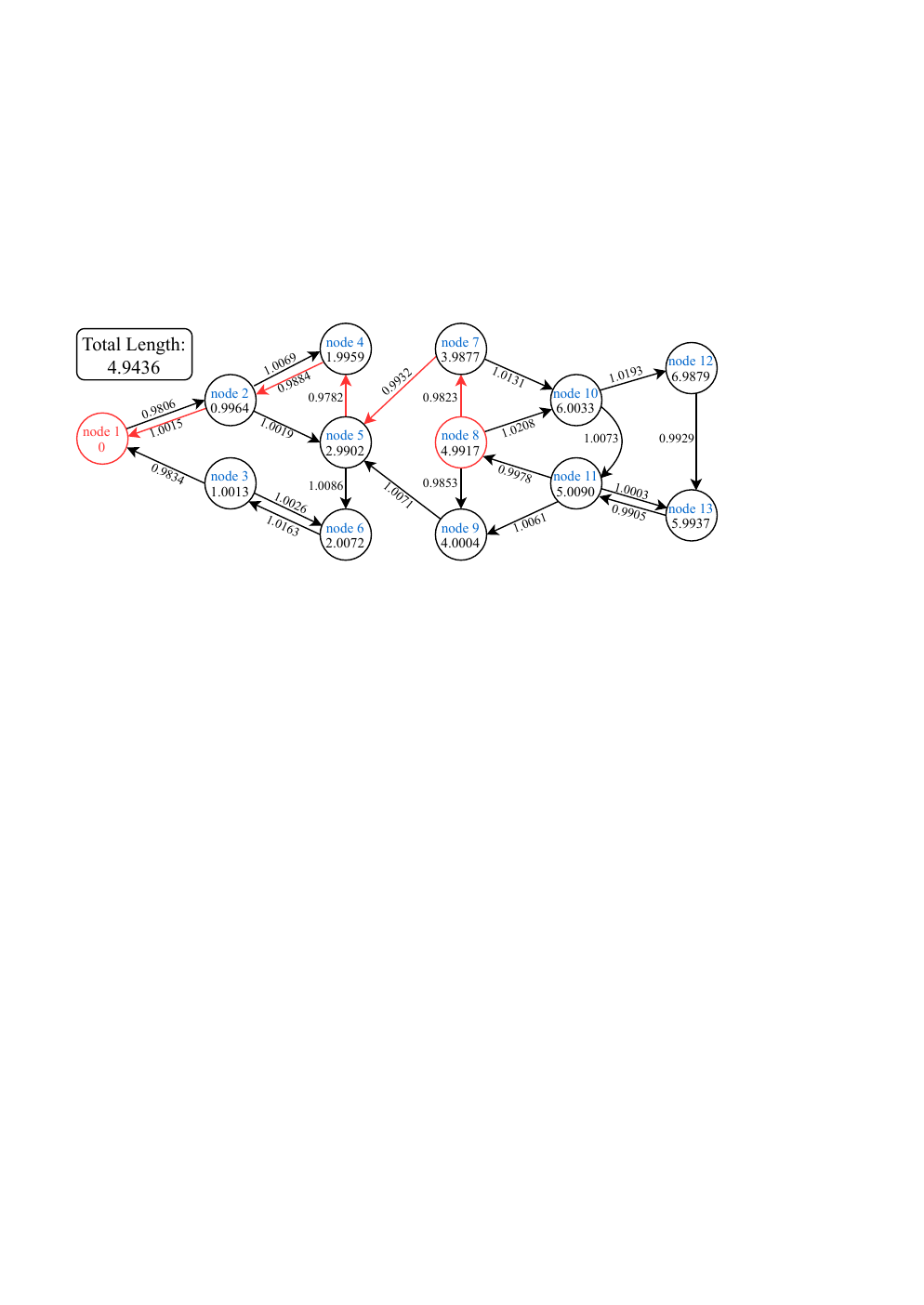}
    \caption{The resulting path of node~\(8\) by terminating~\eqref{eq:pert-DBMC} at \(t_s = 3.1445\) subjected to \(3\%\) variation in edge weights.}
    \label{fig:node-8-path}
\end{figure}

\section{Conclusion}
This paper focuses on DBMC under the PT control strategy proposed in~\cite{article3}, and aims to address the erratic behavior of the DBMC protocol~\eqref{eq:pert-DBMC} near the prescribed time instant, in the presence of disturbances on the edge weights. We first established refined bounds on the state error. Then, by formalizing the notion of correct shortest path identification, we derived a sufficient condition for the early termination time of the perturbed DBMC~\eqref{eq:pert-DBMC} that guarantees shortest path correctness for each node. While the proposed condition is broadly applicable, it may be conservative. Enhancing its tightness, especially under structured graph topologies, remains an open avenue for future research.

\bibliographystyle{IEEEtran}
\bibliography{references}

\appendix

\textbf{Proof of Lemma 1:} We prove (\ref{eq:lem1:eq1}) by mathematical induction. When \( \ell = 0 \), since \( e_{i_0}(t) \equiv 0 \) by~\eqref{eq:pert-DBMC}, both sides of~\eqref{eq:lem1:eq1} equal zero. Hence, the inequality holds trivially. Now assume that~\eqref{eq:lem1:eq1} holds for some \( \ell \in \{0, 1, \cdots, \mathcal{D}(G) - 2 \}  \), we proceed to show \eqref{eq:lem1:eq1} also holds for \( \ell + 1 \).

It follows from ~\eqref{eq:pert-DBMC} and \(i_{\ell} \in \mathcal{N}_{i_{\ell+1}}\) that
\begin{align}
&\dot{e}_{i_{\ell+1}}(t) =\dot{x}_{i_{\ell+1}}(t)\notag
\\=&-\eta(t) ( x_{i_{\ell+1}}(t) - \min_{j\in \mathcal N_{i_{\ell+1}}} \{x_j(t) + w_{i_{\ell+1}j} + u_{i_{\ell+1}j}(t)\} ) \notag\\
\leq&-\eta(t) \left( x_{i_{\ell+1}}(t) - x_{i_{\ell}}(t) - w_{i_{\ell+1} i_{\ell}} - u_{i_{\ell+1} i_{\ell}}(t) \right) \notag \\
=& -\eta(t) \left( (x_{i_{\ell+1}}(t)-p_{i_{\ell+1}}) - (x_{i_{\ell}}(t) - p_{i_{\ell}}) - u_{i_{\ell+1} i_{\ell}}(t) \right) \label{eq:lem1:eq2}\\
\leq& -\eta(t) ( e_{i_{\ell+1}}(t) - e_{i_{\ell}}(t) - u_{i_{\ell+1} i_{\ell}}^+).\label{eq:lem1:eq3}
\end{align}
where \eqref{eq:lem1:eq2} uses \(  p_{i_{\ell+1}} = w_{i_{\ell+1}i_{\ell}} + p_{i_{\ell}} \) since \( i_{\ell+1} \in \mathcal{P}_{i_{\ell}} \).

Substituting the induction hypothesis~\eqref{eq:lem1:eq1} into~\eqref{eq:lem1:eq3}, we obtain
\begin{align}
&\dot e_{i_{\ell+1}}(t)+\eta(t)e_{i_{\ell+1}}(t)\notag\\
\leq& \eta(t)\sum_{k=0}^{\ell}  \frac{e_{i_k}(0)\, (\ln \phi(t))^{\ell - k}}{\phi(t)\,(\ell - k)!} + 
\eta(t) \sum_{k=0}^{\ell+1} u_{i_k i_{k-1}}^{+}.\label{eq:lem1:eq4}
\end{align}

Inequality~\eqref{eq:lem1:eq4} is essentially a first-order differential inequality that can be solved analytically. By applying the comparison principle~\cite{khalil2002} and recalling the definition of \(\phi(t)\) in~\eqref{eq:phi}, we have
\begin{align}
e_{i_{\ell+1}}(t) \leq & \frac{1}{\phi(t)} \Bigg( \int_0^t \Big(
 \eta(s) \sum_{k=0}^{\ell} \frac{e_{i_k}(0)\, (\ln \phi(s))^{\ell -k}}{\phi(s)\,(\ell -k)!} + \notag\\
& \eta(s) \sum_{k=0}^{\ell+1} u_{i_k i_{k-1}}^{+}
\Big) \phi(s) \,\mathrm{d}s + e_{i_{\ell+1}}(0) \Bigg)\notag\\
=&\underbrace{\frac{1}{\phi(t)} \sum_{k=0}^{\ell} \int_0^t   \frac{e_{i_k}(0)\, (\ln \phi(s))^{\ell -k}}{\phi(s)\,(\ell -k)!} \mathrm d \phi(s)}_{\mathcal I(t)} + \notag\\
& \dfrac{1}{\phi(t)}\sum_{k=0}^{\ell+1}u_{i_k i_{k-1}}^+ \int_0^t\,\mathrm{d}\phi(s) +\dfrac{e_{i_{\ell+1}}(0)}{\phi(t)},\label{eq:lem1:eq6}
\end{align}
where \eqref{eq:lem1:eq6} uses \( \mathrm{d} \phi(s) = \eta(s)\phi(s)\,\mathrm{d}s\).

Since \(\phi(0)=1\), we have
\begin{align}
\dfrac{1}{\phi(t)}\int_0^t\,\mathrm{d}\phi(s)=\dfrac{\phi(t)-\phi(0)}{\phi(t)}\leq 1,\label{eq:lem1:eq7}
\end{align}

and 
\begin{align}
\mathcal I(t)=&\dfrac{1}{\phi(t)}\sum_{k=0}^{\ell}\int_0^t   \frac{e_{i_k}(0)\, (\ln \phi(s))^{\ell-k}}{(\ell -k)!} \mathrm d \ln\phi(s)\notag\\
=&\sum_{k=0}^{\ell}\dfrac{e_{i_k}(0)}{\phi(t)}\dfrac{(\ln\phi(t))^{\ell+1-k}}{(\ell+1-k)!}.\label{eq:lem1:eq8}
\end{align}
Substituting~\eqref{eq:lem1:eq7} and~\eqref{eq:lem1:eq8} into~\eqref{eq:lem1:eq6} produces the bound for \( e_{i_{\ell+1}}(t) \) in the same form as~\eqref{eq:lem1:eq1}, completing the proof.

\textbf{Proof of Lemma 4:}  
Using the inequality \( \ln x \leq x/e \) and the Stirling-type lower bound~\cite{robbins1955stirling}, i.e., \( k! \geq \sqrt{2\pi k}(k/e)^k>(k/e)^k \), we obtain that for all $k \geq 1$,
\begin{align}
\dfrac{(\ln \phi(t))^k}{\phi(t) k!}
= \dfrac{\left(qk \ln(\phi(t)^{1/(qk)})\right)^k}{\phi(t) k!}\\
< \dfrac{\left(qk \phi(t)^{1/(qk)}/e \right)^k}{\phi(t)(k/e)^k}
= q^k\, \phi(t)^{1/q - 1}, \label{eq:stirling-estimate}
\end{align}
whenever \(q> 1\). When \(k=0\), \(1/\phi(t)< q^0 \phi(t)^{1/q-1}\) holds trivially. Therefore, for all $k \geq 0$, substituting the expression of $\phi(t)$ from~\eqref{eq:phi} yields
\begin{align}
\dfrac{(\ln \phi(t))^k}{\phi(t) k!}
&< q^k\, e^{\gamma t(1/q - 1)} \left( \dfrac{T_s - t}{T_s} \right)^{(2h + 2)(1 - 1/q)} \notag\\
&\leq q^k \left( \dfrac{T_s - t}{T_s} \right)^{(2h + 2)(1 - 1/q)}.\label{eq:lemma4:eq2}
\end{align}

Now, applying~\eqref{eq:lemma4:eq2} to~\eqref{eq:mathcal-E} and noting that \( e_{i_k}(0) \leq \chi_0 \), we obtain
\begin{align}
\mathcal E_{i_\ell}(t) <&{\chi_0}\left(\dfrac{T_s-t}{T_s}\right)^{(2h+2)(1-1/q)}\sum_{k=1}^{\ell} q^{\ell-k}\notag\\
=&{\chi_0}\dfrac{q^{\ell}-1}{q-1}\left(\dfrac{T_s-t}{T_s}\right)^{(2h+2)(1-1/q)},
\end{align}
\noindent
which completes the proof.
\end{document}